\newtheorem{thm}{Theorem}%[section]
\newtheorem{lem}{Lemma}
\begin{document}
\pagestyle{empty}
%%%%%%%%%%%%%%%%%%%%%%%%%%%%%%%%%%%%%%%%%%%%%%%%%%%%%%%%%%%%%%%%%%%%%%%
\title{Power Control in Full Duplex Networks:\\ Area Spectrum Efficiency and Energy Efficency}
\author{Chenyuan Feng$^*$, Yi Zhong$^*$, Tony Q.S. Quek$^*$ and Gang Wu$^\dagger$\\
$^*$Singapore University of Technology and Design, Singapore\\
$^\dagger$University of Electronic Science and Technology of China, Chengdu, China\\
Emails: chenyuan\_feng@mymail.sutd.edu.sg, \{yi\_zhong, tonyquek\}@sutd.edu.sg, wugang99@uestc.edu.cn
}
\maketitle
\begin{abstract}
 Full-duplex (FD) allows the exchange of data between nodes on the same temporal and spectrum resources, however, it introduces self interference (SI) and additional network interference compared to half-duplex (HD). Power control in the FD networks, which is seldom studied in the literature, is promising to mitigate the interference and improve the performance of the overall network. In this work, we investigate the random and deterministic power control strategies in the FD networks, namely, constant power control, uniform power control, fractional power control and ALOHA-like random on-off power control scheme. Based on the obtained coverage probabilities and their robust approximations, we show that power control provides remarkable gain in area spectrum efficiency (ASE) and energy efficiency (EE), and improves the fairness among the uplink (UL) and downlink (DL) transmissions with respect to the FD networks. Moreover, we evaluate the minimum SI cancellation capability to guarantee the performance of the cell-edge users in FD networks. Generally, power control is helpful to improve the performance of the transmission for long distance in the FD networks and reduce the requirement of SI cancellation capability.
\end{abstract}
%\IEEEpeerreviewmaketitle
\thispagestyle{empty}

%\begin{IEEEkeywords}
%Area spectrum efficiency, energy efficiency, full-duplex, power control, Poisson point process
%\end{IEEEkeywords}

\IEEEpeerreviewmaketitle
\section{Introduction}
Traditionally, radio transceivers are subject to a half duplex (HD) constraint because of the crosstalk between the transmit and receive chains. The self-interference (SI) caused by the transmitter at the receiver if using full duplex (FD) transmission overwhelms the desired received signal from the partner node. If the SI and increasing network interference introduced by FD can be well managed, FD communication can potentially double the spectrum efficiency. The throughput of FD network has been investigated, such as \cite{HFDHD} and \cite{FDgain}. Despite the large amount of literature on studying the area spectrum efficiency (ASE) of the FD networks, discussion on energy efficiency (EE) \cite{7054598, 7015548} has been rarely addressed. And whether FD communication can be used in large scale network is still an open question. Key to success is well designed power control scheme which can ensure the communication link reliability as well as mitigate network level interference.
%Power control is promising to improve the link reliability, the fairness and the EE. However, there are only few works that consider SI together with network interference in the large wireless networks with FD nodes \cite{HFDHD}, \cite{FDgain}. Most of these works focused on the ASE and did not discuss the effect of FD on the EE \cite{tonyee}; meanwhile, the effect of power control on the EE for the FD wireless networks is rarely been been explored.

In HD radio wireless networks, two main approaches have been used to analyze and design power control schemes, namely, the optimization approach and the game theory approach. The optimization approach takes a global point of view and aims at finding the assignment of power that maximizes certain global metrics \cite{RPCPN4}, \cite{RPCPN7}. These works model the users as cooperative individuals and attempt to propose distributed power control algorithms. The distributed algorithms with different global optimization goals were proposed in cellular networks \cite{RPCPN8} and ad hoc networks \cite{RPCPN4}, \cite{RPCPN7}, \cite{RPCPN9}. The game theory approach views the network as a collection of non-cooperative nodes with conflict interest and aims to determine solutions that maximize certain global utilities \cite{RPCPN10}, \cite{RPCPN17}. These works consider the cases where any inter-node coordination is not allowed and aim at finding robust power control schemes such that no individual node of the network can achieve a better expected performance by unilaterally deviating from current schemes \cite{RPCIPN}.

Most of the above works consider the power control strategies that the power level is deterministic at each transmitter. The work in \cite{RDPCIWN} introduces random power control with uniformly distributed transmit power, which reveals that uniformly distributed power control outperforms fixed power control in terms of outage probability. \cite{FPCFDWN} investigate the impact of the fractional power control on outage probability and transmission capacity in decentralized wireless networks. \cite{DPC} focuses on multi-user cases in a wireless clustered ad hoc network, and reveals that the discrete power control can significantly improve the transmission capacity and balance the UL and DL throughputs. \cite{RDPCIWN}, \cite{19} demonstrate the benefits of random power control in the presence of interference in HD networks.

Related works on power control in FD networks are listed as follows. By employing the Lagrange multiplier method, \cite{fdpc2} derived an optimal joint power control solution to maximize the sum rate of UL and DL transmissions. In \cite{fdpc3}, the optimal power control scheme was obtained via the exhaustive search for pairwise UE and BS power. However, most of the works related to power control in FD network focus on the single cell and cannot extend to the large scale networks directly. All the above power control schemes require instantaneous channel state information (CSI) at the transmitter, while is not required in our proposed power control schemes.

In this paper, we consider a large scale network with two-node bidirectional FD communication and develop a framework for FD networks in the presence of the SI and the network interference. Since the interference from the BSs is usually stronger than the interference from the UEs and it is more feasible for the BSs to implement power control, we only consider the power control in the DL, i.e., all BSs will employ the power controls and all UEs will transmit with fixed power. Then, we investigate the ASE and the EE of the network, and evaluate the tradeoff between UL and DL transmissions under different power control schemes.

This work reveals that the proposed power control schemes in the FD networks can provide remarkable gain for the throughput over the HD network and improve the fairness between the UL and DL transmissions. Additionally, this results show the benefit of randomness in the transmit power on the ASE and the EE, especially for the transmissions with short distance. For SI cancellation capability, the proposed power control schemes in FD networks help to improve the coverage of the network.

The remainder of the paper is organized as follows. Section \ref{sec:model} describes the system model and the power control schemes and then analyzes the property of the interference. Section \ref{sec:iapc} analyzes the coverage probability and the ASE and the EE for different power control schemes. Section \ref{sec:numerical results} quantifies the effect of different network parameters, such as self-interference cancellation capability, on the ASE and the EE. Finally, section \ref{sec:conclusion} concludes this paper.

\section{System Model}
\label{sec:model}
%\subsection{Network Model}
We consider a single-tier cellular network, where BSs and UEs are distributed randomly according to two independent homogeneous Poisson point processes (PPP) ${\Phi }$ and ${\Phi }_{\text{u}}$ with intensity $\lambda $ and ${\lambda }_{\text{u}}$, respectively. The UEs always access the nearest BS, which provides the largest long term receiving power; thus, the coverage region of each BS constructs a Voronoi cell (see Fig.\ref{fig:voronoicell}). We assume that the universal frequency reuse is adopted for all BSs over bandwidth $W$, and all users in the same cell adopt the time division multiple access to avoid intra-cell interference.
%\begin{figure}[!t]
%\setlength{\abovecaptionskip}{0pt}
%\setlength{\belowcaptionskip}{5pt}
%\centering
%\includegraphics[scale=0.45]{PPPVOR.eps}
%\vspace{-1em}
%\caption{An example of PPP-Voronoi cells. The squares denote the locations of BSs, and the crosses denote the locations of UEs. The BSs with no UE will turn into sleeping mode to save energy and reduce the total network interference.}
%\label{fig:voronoicell}
%\end{figure}
Each UE associates with its nearest BS with maximum average received power, thereby the link distances $R_i$ are Rayleigh distributed with mean $1/2\sqrt{\lambda}$, i.e., the probability density function (pdf) of the link distance $R_i$ is given by
\begin{equation}
\label{eq:linkdistance}
f_{R}\left(r \right) = \frac{d\mathbb{P}\left[\mathit{R}\leq r \right]}{dr}=2\pi \lambda r e^{-\lambda \pi r^2}, r \geq 0.
\end{equation}
According to the association rule, there will be some cells covering more than one UE, in which case the BS will randomly choose a UE to serve in each time slot and all users in the same cell adopt the time division multiple access (TDMA) to access the BS. Meanwhile, there might exist some BSs serving no UE, in which case these BSs will turn into sleeping mode to save energy and mitigate the network interference. The idle probability, denoted by $p_0$, could be obtained form \cite{idleprob} as
\begin{small}
\begin{equation}
 p_0 \approx {\left(1+{3.5}^{-1}\frac{\lambda_{\text{u}}}{\lambda} \right)}^{-3.5}.
\end{equation}
\end{small}

For analytical tractable, we approximate the locations of the active interfering BSs is an independently thinning version of the original PPP with probability $p_0$. In fact, this thinning is dependent on the cell size and the UE distribution.
%marked PPP model the active node
In an arbitrary time slot, all active nodes are paired, and each pair consists of a single BS and its dedicate UE. Thereby, the network topology can be modeled by a marked Poisson point process (PPP) $\hat{\Phi}=\{\left(x_i,m(x_i)\right)\} \subset \mathbb{R}^2 \times \mathbb{R}^2$, where $\Phi_{\text{b}}=\{x_i\}$ is a homogeneous PPP with intensity $\lambda_{\text{b}}=\left(1-p_0\right)\lambda$ which denote the locations of BSs and the mark $m\left(x_i\right)$ denotes the location of its dedicated UE. The link distance, denoted by $R_{x_i}={\|x_i-m(x_i)\|}$, and $\{R_i\}$ are iid random variables with pdf $f_R$ given by (\ref{eq:linkdistance}).

All UEs and BSs are equipped with a single antenna and work in the FD mode. The SI is assumed to be cancelled imperfectly with residual SI-to-power ratio $\beta$, i.e., when the transmit power of a node is $P_{\text{r}}$, the residual SI is $\beta P_{\text{r}}$. The parameter $\beta$ quantifies the amount of SI cancellation capability. When $\beta= 0$, the SI cancellation is perfect, and for $\beta= 1$, there is no SI cancellation.

%\subsection{Network Interference}
We consider a typical BS at $x_0$ and its associated UE at $m(x_0)$. Without loss of generality, we assume that the typical UE is at the origin, i.e., $m(x_0)=\mathit{o}$. With a propagation channel model with path loss, $l(r)=r^{-\alpha}, \alpha >2 $, where $\alpha$ is the path loss coefficient and Rayleigh fading, the signal-to-interference ratio (SIR) is
\begin{small}
\begin{equation}
\mathrm{SIR}=\frac{P_{\text{t}} h R^{-\alpha}}{P_{\text{r}}\beta+ I}, \end{equation}
\end{small}
\hspace*{-0.4em}where $P_{\text{t}}$ and $P_{\text{r}}$ are the transmit power at the transmitter and the receiver respectively, and $h$ is power fading coefficients from the transmitter to the desired receiver. We focus on the i.i.d Rayleigh fading case, $h$ is exponentially distributed with unit mean. Let $\theta_{\text{r}}$ be the threshold of the SIR for successful transmission, i.e., a transmission is successful if the SIR is greater than $\theta_{\text{r}}$.

The aggregate interference at the typical UE is given by
\begin{small}
\begin{equation}
\label{eq:interference}
I=\sum \limits_{x_i\in \Phi_{\text{b}}\backslash\{x_0\}} P_{x_i} h_{x_i} d^{-\alpha}_{x_i} + P_{\text{u}} h_{m\left(x_i \right)} d^{-\alpha}_{m\left( x_i\right)},
\end{equation}
\end{small}
\hspace*{-0.5em}where $ d^{-\alpha}_{{x}_{i}}={\|x_i-m(x_0)\|}^{-\alpha }$ is the path loss in the interfering link, ${\| \cdot \|}$ denotes the Euclidean norm, $h_{x_i}$ and $h_{m(x_i)}$ are independent and identically distributed (i.i.d.) exponentially distributed random variables with unit mean. We assume that the UEs transmit with constant power $P_{\text{u}}$ and the BS located at $x_i$ transmits with power $P_{x_i}$, the values of which are determined by the power control schemes.

And in this paper, we will discuss several common power control scheme used in HD network in order to find whether these power control schemes benefit the FD network. For lack of space, we don't include the part about how to tune the parameters of each power control scheme so as to extend these schemes and related equipment to FD networks in this paper. For comparison, all power control schemes are under a same peak power constraint, denoted by $P_{\max}$.

\setcounter{equation}{9}
\begin{figure*}[!ht]
\begin{small}
\begin{equation}
\label{eq:fpcfp}
f_P(x)=\frac{2\pi\lambda x^{\frac{2}{\alpha\epsilon}-1}e^{-\pi\lambda{(\frac{x}{\bar{P}})}^{\frac{2}{\alpha\epsilon}}}}{\alpha\epsilon{\bar{P}}^{\frac{2}{\alpha\epsilon}}}+e^{-\pi\lambda{(\frac{P_{\max}}{\bar{P}})}^{\frac{2}{\alpha\epsilon}}}\delta(x-P_{\max}),  P_{\min}\leq x\leq P_{\max},
\end{equation}
%where $P_{\text{s}}$ is static.\\
\end{small}
\hrulefill
\setcounter{equation}{10}
\end{figure*}

\begin{itemize}
    \item{Constant Power Control (CPC)}

Under the CPC scheme, all BSs transmit with  peak power.%This scheme will serve as a benchmark to compare the performance of the different power control schemes.
The transmit power of all BSs and its pdf are
\setcounter{equation}{4}
\begin{align}
 &P_{x_i}=P_{\max},\\
 &f_P(x)=\delta(x-P_{\max}),
\end{align}
\setcounter{equation}{6}
%\centering \rule[0pt]{18cm}{0.3pt}
%\setcounter{equation}{\value{equation}}
\hspace*{-0.4em}where $\delta(\cdot)$ is Dirac delta function.
\item{Uniform Power Control (UPC)}

Under the UPC policy, all BSs will independently and randomly choose their transmit power from the range $\left[P_{\min},P_{\max}\right]$. The values of $P_{\min}$ and $P_{\max}$ are the same for all BSs. Then the transmit power of the BSs located at $x_i$ and its pdf are
{%\setlength\abovedisplayskip{1pt}
\begin{align}
&P_{x_i} \sim \text{Uniform} \left[P_{\min}, P_{\max}\right],\\
&f_P(x)=\frac{1}{P_{\max}-P_{\min}}, P_{\min} \leq x \leq P_{\max}.
\end{align}}

\item{Fractional Power Control (FPC)}

Under the FPC scheme, the transmit power of each BS is tuned according to the path loss of the desired link; specially, it is chosen as the path loss raised to an exponent, denoted by $\epsilon$, where $\epsilon$ is a constant within $\left[0,1\right]$. The FPC scheme depends on the link distance between the BS and its dedicated UE, thus, this scheme is helpful to compensate the path loss and overcome the near-far problem. Then the transmit power of the BSs located at $x_i$ is
\begin{equation}
P_{x_i}=\begin{cases}
 & \bar{P}R^{\alpha\epsilon}_{x_i},\text{ if } \bar{P}R_{x_i}^{\alpha\epsilon}\leq P_{\max},\\
 & P_{\max}, \text{ if } \bar{P}R_{x_i}^{\alpha\epsilon} > P_{\max},
\end{cases}
\end{equation}
where $\bar{P}$ is a constant, which is a cell-specific parameter and its pdf is given as (\ref{eq:fpcfp}).

\item{Aloha-like Random Power Control (APC)}

Similar to ALOHA protocol, under the ALOHA-like random on-off power control scheme, each BS will transmit at a certain power level with certain probability; otherwise, it will turn into sleeping mode. Both the transmit power and the transmit probability are chosen based on the desired link distance and the SIR threshold. This scheme improves the link reliability but reduces bandwidth efficiency since the BS only transmits when channel condition is good.
Then the transmit power of the BSs located at $x_i$ and its pdf are
\setcounter{equation}{10}
\begin{align}
&P_{x_i}=\begin{cases}
 & \bar{P},\text{ at probability } \xi,\\
 & 0, \text{at probability } 1-\xi,\\
\end{cases}\\
\label{eq:fpapc}
& f_P(x)=\xi\delta(x-\bar{P}),
\end{align}
\setcounter{equation}{12}
\hspace*{-0.5em}where $\delta(\cdot)$ is Dirac delta function, $\bar{P}$ is random transmit power level, $\xi$ is transmit probability.
\end{itemize}

\section{Network Analysis}
\label{sec:iapc}
\subsection{Coverage Probability}
Based on the aforementioned SIR model given by (\ref{eq:interference}), the coverage probability at the typical receiver is
\begin{align}
\label{eq:coverage}
    p_{\text{c,r}}&=\mathbb{P}\left(\frac{P_{\text{t}} h R^{-\alpha}}{P_{\text{r}}\beta +I}>\theta_r\right)\nonumber\\
    &=\mathbb{E}_{P,R,I}\left\{\mathbb{P}\left(h>\frac{\theta_{\text{r}} (\beta P_{\text{r}} +I) }{P_{\text{t}}}R^{\alpha}\bigg|P_{\text{t}},\mathit{R}\right)\right\}\nonumber\\
    &= \mathbb{E}_{P,R,I}\left\{ \exp\left(-\frac{\theta_{\text{r}} R^{\alpha}}{P_\text{t}}P_{\text{r}}\beta\right) \exp\left(-\frac{\theta_{\text{r}}}{P_{\text{t}}} R^{\alpha}I\right)\right\}\nonumber\\
    &=\mathbb{E}_{P,R}\left\{ \exp\left(-sP_{\text{r}}\beta\right)\mathcal{L}_I(s)|_{s=\frac{\theta_{\text{r}} R^{\alpha}}{P_{\text{t}}}}\right\},
\end{align}
where $P_{\text{t}}$, $P_{\text{r}}$ is transmission power at transmitter and receiver, respectively, $\theta_{\text{r}}$ is SIR target value. Notice that (\ref{eq:coverage}) is applicable to both UL and DL transmissions. %To simplify the notation, we denote the pdf of BSs' transmit power as $f_{P_{\text{t}}}(\cdot)$ for all power control schemes.
\begin{proof}
step (a) is due to the total probability formula, step (b) is because of $|h| \sim \exp(1)$, and step (c) is for Laplace transform.
\end{proof}

\begin{figure*}[!ht]
\begin{small}
%{\setlength{\abovecaptionskip}{5pt}
%\setlength{\belowcaptionskip}{5pt}
\setcounter{equation}{21}
\begin{align}
     \label{eq:capacity}
    \tau^{\text{FD}}&=R_{\text{u}}\mathbb{P}\left(\frac{Ph_{x_0}R^{-\alpha}}{P_{\text{u}}\beta+I}>\theta_{\text{u}}\right)+R_{\text{b}}\mathbb{P}\left(\frac{P_{\text{u}}{h}_{m(x_0)}R^{-\alpha}}{P\beta+I}>\theta_{\text{b}}\right)\nonumber\\
    &=R_{\text{u}}\mathbb{E}_{P,R}\left\{e^{-\frac{\theta_{\text{u}} R^{\alpha}}{P}P_{\text{u}}\beta}\mathcal{L}_I(\frac{\theta_{\text{u}} R^{\alpha}}{P})\right\}+R_{\text{b}}\mathbb{E}_{P,R}\left\{e^{-\frac{\theta_{\text{b}} R^{\alpha}}{P_{\text{u}}}P\beta}\mathcal{L}_I(\frac{\theta_{\text{b}} R^{\alpha}}{P_{\text{u}}})\right\}.\\%\label{eq:linkthroughput1}
%\end{align}
%\end{small}
%\hrulefill
%\end{figure*}
\setcounter{equation}{22}
\setcounter{equation}{25}
%\begin{figure*}[!ht]
%\begin{small}
%{\setlength{\abovecaptionskip}{5pt}
%\setlength{\belowcaptionskip}{5pt}
%\begin{align}
\label{eq:EE}
\eta_{\text{EE}}%&=\mathbb{E}_{P}\left\{\frac{R_{\text{u}}\mathbb{E}_{R}\left\{e^{-sP_{\text{u}}\beta}\mathcal{L}_I(s)|_{s=\frac{\theta_{\text{u}} R^{\alpha}}{P}}\right\}+R_{\text{b}}\mathbb{E}_{R}\left\{e^{-sP\beta}\mathcal{L}_I(s)|_{s=\frac{\theta_{\text{b}} R^{\alpha}}{P_{\text{u}}}}\right\}}{P+P_{\text{u}}+P_{\text{s}}}\right\}\nonumber\\
&=\mathbb{E}_{\text{P,R}}\left\{\frac{R_{\text{u}}e^{-\frac{\theta_{\text{u}} R^{\alpha}}{P}P_{\text{u}}\beta}\mathcal{L}_I(\frac{\theta_{\text{u}} R^{\alpha}}{P})+R_{\text{b}}e^{-\frac{\theta_{\text{b}} R^{\alpha}}{P_{\text{u}}}P\beta}\mathcal{L}_I(\frac{\theta_{\text{b}} R^{\alpha}}{P_{\text{u}}})}{P+P_{\text{u}}+P_{\text{s}}} \right\}.
\end{align}
\setcounter{equation}{26}
\end{small}
\hrulefill
\end{figure*}

\begin{thm}
\label{thm:laplace}
 The Laplace transform of the aggregate interference at the typical UE is
\setcounter{equation}{13}
\begin{small}
\begin{multline}
\label{eq:laplace}
     \mathcal{L}_{I}\left(s\right)=\exp\bigg(-2\pi\lambda_{\text{b}} \mathbb{E}_P\bigg\{\int_{0}^{\infty}\bigg(1- \int_{0}^{2\pi}\int_{0}^{\infty}\frac{1}{1+sPv^{-\alpha}}\\
     \times \frac{\lambda r e^{-\pi\lambda r^2}drd\varphi}{1+su\left(v^2+r^2+2vr\cos\varphi \right)^{-\alpha/2}}\bigg)vdv\bigg\}\bigg).
\end{multline}
\end{small}
\setcounter{equation}{14}
\end{thm}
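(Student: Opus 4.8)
The plan is to evaluate $\mathcal{L}_I(s)=\mathbb{E}[e^{-sI}]$ directly from the definition in (\ref{eq:interference}), exploiting the independence built into the marked PPP $\hat{\Phi}$ together with the exponential fading, and then invoking the probability generating functional (PGFL) of the underlying PPP $\Phi_{\text{b}}$. First I would substitute (\ref{eq:interference}) into $\mathbb{E}[e^{-sI}]$ and note that $I$ is a sum over the interfering BSs $x_i\in\Phi_{\text{b}}\setminus\{x_0\}$, each contributing a BS downlink term $P_{x_i}h_{x_i}\|x_i\|^{-\alpha}$ and a paired-UE uplink term $P_{\text{u}}h_{m(x_i)}\|m(x_i)\|^{-\alpha}$. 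Because the fading variables, the marks $m(x_i)$ and the powers $P_{x_i}$ are mutually independent, $e^{-sI}$ factorizes over the points, so conditioned on $\Phi_{\text{b}}$ the expectation becomes a product of per-point factors.

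Next I would carry out the per-point expectation in two sub-steps. Using $\mathbb{E}[e^{-tah}]=(1+ta)^{-1}$ for $h\sim\exp(1)$, averaging over the two fading coefficients converts each interferer's contribution into $\frac{1}{1+sP_{x_i}v^{-\alpha}}\cdot\frac{1}{1+sP_{\text{u}}\|m(x_i)\|^{-\alpha}}$ with $v=\|x_i\|$. I would then average over the mark: the dedicated UE of an interfering BS lies at link distance $R_{x_i}$ with pdf $f_R$ from (\ref{eq:linkdistance}) and at an angle $\varphi$ uniform on $[0,2\pi)$, so the law of cosines gives $\|m(x_i)\|=(v^2+R_{x_i}^2+2vR_{x_i}\cos\varphi)^{1/2}$. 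Integrating the UE factor against $f_R(r)/(2\pi)=\lambda r e^{-\pi\lambda r^2}$ and over $\varphi$ reproduces precisely the inner double integral of (\ref{eq:laplace}).

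Finally I would apply the PGFL of the homogeneous PPP $\Phi_{\text{b}}$ of intensity $\lambda_{\text{b}}$, namely $\mathbb{E}[\prod_{x_i}f(x_i)]=\exp(-\lambda_{\text{b}}\int_{\mathbb{R}^2}(1-f(x))\,dx)$, where Slivnyak's theorem justifies dropping the excluded serving BS $x_0$ without changing the intensity. Passing to polar coordinates and using that the per-point factor depends on $x_i$ only through $v=\|x_i\|$ collapses the outer angular integral to a factor $2\pi$, giving $2\pi\lambda_{\text{b}}\int_0^\infty(1-f(v))\,v\,dv$; carrying the average over the random BS power $P=P_{x_i}$ through the integral (by Fubini, since the powers are i.i.d.\ across BSs) produces the $\mathbb{E}_P\{\cdots\}$ in (\ref{eq:laplace}).

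I expect the main obstacle to be the FPC scheme, where $P_{x_i}=\bar{P}R_{x_i}^{\alpha\epsilon}$ is a deterministic function of the very link distance $R_{x_i}$ that also fixes the interfering UE's position, so $P$ and $r$ are coupled. Writing $\mathbb{E}_P$ with the marginal pdf (\ref{eq:fpcfp}) separately from the inner $r$-integral is then a decoupling approximation rather than an exact identity; I would state this explicitly and argue it is accurate because the dominant BS interference factor depends on the BS distance $v$ and not on $r$. For CPC, UPC and APC the powers are independent of the geometry and the derivation is exact, and throughout I would note that the argument rests on the earlier independent-thinning approximation that models the active interfering BSs as a homogeneous PPP of intensity $\lambda_{\text{b}}$.
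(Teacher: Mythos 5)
Your proposal follows essentially the same route as the paper: factorize $e^{-sI}$ over the marked PPP, average the exponential fading to obtain the product of the two rational per-interferer factors, express the interfering UE's distance via the law of cosines with the Rayleigh link-distance density, and apply the PGFL of $\Phi_{\text{b}}$ in polar coordinates. Your added caveat about the FPC scheme --- that $P_{x_i}=\bar{P}R_{x_i}^{\alpha\epsilon}$ couples the power average to the inner $r$-integral, so pulling $\mathbb{E}_P$ outside the integral is only a decoupling approximation in that case --- is a genuine subtlety that the paper's own proof does not acknowledge.
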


\begin{proof}
For geometric knowledge, we have $d_{m(x_i)}= d_{x_i}^2 + R_{x_i}^2-2R_{x_i}d_{x_i}\cos(\varphi)$, using Laplace functional, we can obtain
%\begin{small}
%\begin{align}
 %   \hspace*{-0.5em}\mathcal{L}_{I}\left(s\right)%&=&\mathbb{E}\left\{e^{-sI}\right\}\nonumber\\%
  %  &=\mathbb{E}\left\{e^{-\sum \limits_{x_i\in \Phi_{\text{b}}\backslash\{x_0\}} s\left(P_{x_i} h_{x_i} d^{-\alpha}_{x_i} + P_{\text{u}} h_{m\left(x_i \right)} d_{m\left( x_i\right)}^{-\alpha} \right)}\right\}\nonumber\\
  %  &=\mathbb{E}_{\Phi _{\text{b}},P,h}\left\{\prod_{x_i\in \Phi_{\text{b}}\backslash\{x_0\}} e^{ -s(P_{x_i} h_{x_i}d_{x_i}^{-\alpha }+P_{\text{u}} h_{m(x_i)} d_{m(x_i)}^{-\alpha })}\right\}\nonumber\\
  %  &\overset{\text{(a)}}{=}\mathbb{E}_{\Phi_{\text{b}},P}\left\{\prod_{x_i\in \Phi_{\text{b}}\backslash\{x_0\}}\frac{1}{1+sP_{x_i}d_{x_i}^{-\alpha}}\frac{1}{1+sP_{\text{u}}d_{m\left(x_i\right)}^{-\alpha}}\right\},\label{eq:laststep}
%\end{align}
%\end{small}

\scalebox{0.85}{
\parbox{0.54\textwidth}{
\begin{align}
\label{eq:laststep}
\mathcal{L}_{I}\left(s\right)
 &=\mathbb{E}\left\{e^{-\sum \limits_{x_i\in \Phi_{\text{b}}\backslash\{x_0\}} s\left(P_{x_i} h_{x_i} d^{-\alpha}_{x_i} + P_{\text{u}} h_{m\left(x_i \right)} d_{m\left( x_i\right)}^{-\alpha} \right)}\right\}\notag\\
 &=\mathbb{E}_{\Phi _{\text{b}},P,h}\left\{\prod_{x_i\in \Phi_{\text{b}}\backslash\{x_0\}} e^{ -s(P_{x_i} h_{x_i}d_{x_i}^{-\alpha }+P_{\text{u}} h_{m(x_i)} d_{m(x_i)}^{-\alpha })}\right\}\notag\\
 &\overset{\text{(a)}}{=}\mathbb{E}_{\Phi_{\text{b}},P}\left\{\prod_{x_i\in \Phi_{\text{b}}\backslash\{x_0\}}\frac{1}{1+sP_{x_i}d_{x_i}^{-\alpha}}\frac{1}{1+sP_{\text{u}}d_{m\left(x_i\right)}^{-\alpha}}\right\},
\end{align}
}
}
where (a) is obtained by taking expectation of $h$ and the deduction from (\ref{eq:laststep}) to (\ref{eq:laplace}) follows from the probability generating functional (PGF) of the PPP.
\end{proof}
The Laplace transform is not in closed-form. However, we can further decouple the UL and the DL calculation and simplify the result by deriving the upper and lower bounds for the Laplace transform of the interference.

Assuming the spatial distribution of the BSs and the UEs as a compound PPP, the locations of BSs and their dedicate UEs are approximated to be co-located, i.e, $d_{m(x_i)}={(d_{x_i}^2+R_{x_i}^2-2R_{x_i}d_{x_i})}^{\frac{1}{2}} \approx d_{x_i}$, we obtain the following Lemma.
\begin{lem}
\label{thm:approximation1}
The Laplace transform of the interference is upper bounded by
\begin{small}
\begin{equation}
     \label{eq:approximation}
     {\mathcal{L}}_{I}(s) \leq  \exp\left(-\lambda_{\text{b}}\frac{{\pi}^{2}\delta}{\sin\left(\pi\delta\right)}{s}^{\delta}\mathbb{E}_P\left\{\frac{{P_{\text{u}}}^{1+\delta}-\mathit{P}^{1+\delta}}{P_{\text{u}}-\mathit{P}}\right\}\right).
\end{equation}
\end{small}
\hspace*{-0.5em}where $\delta=2/\alpha <1$.
\end{lem}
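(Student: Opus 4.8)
The plan is to start from the exact expression for $\mathcal{L}_I(s)$ in Theorem \ref{thm:laplace}, apply the stated co-location (compound-PPP) approximation, and then evaluate the resulting radial integral in closed form. Recall that in (\ref{eq:laststep}) the per-point kernel factorises, after averaging over the fading $h$, into the product $\bigl(1+sP_{x_i}d_{x_i}^{-\alpha}\bigr)^{-1}\bigl(1+sP_{\text{u}}d_{m(x_i)}^{-\alpha}\bigr)^{-1}$, one factor carrying the interfering BS power and the other the interfering UE power. The co-location approximation $d_{m(x_i)}\approx d_{x_i}=v$ replaces the UE distance by the BS distance, so the second factor ceases to depend on the link distance $r=R_{x_i}$ and the angle $\varphi$. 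The inner integral $\int_0^{2\pi}\int_0^\infty \lambda r e^{-\pi\lambda r^2}\,dr\,d\varphi$ of the Rayleigh link-distance density then integrates to one, collapsing the double spatial integral in (\ref{eq:laplace}) to a single radial integral and decoupling the UL and DL powers.

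With $a=sP$ and $b=sP_{\text{u}}$, the surviving integrand is $1-\bigl[(1+av^{-\alpha})(1+bv^{-\alpha})\bigr]^{-1}$. First I would split it by partial fractions, writing $1-\frac{1}{(1+av^{-\alpha})(1+bv^{-\alpha})}=\frac{a}{a-b}\bigl(1-\frac{1}{1+av^{-\alpha}}\bigr)-\frac{b}{a-b}\bigl(1-\frac{1}{1+bv^{-\alpha}}\bigr)$, which reduces everything to the single elementary integral $\int_0^\infty\bigl(1-\frac{1}{1+cv^{-\alpha}}\bigr)v\,dv$. Substituting $u=v^{\alpha}/c$ turns this into $\frac{c^{\delta}}{\alpha}\int_0^\infty \frac{u^{\delta-1}}{1+u}\,du=\frac{c^{\delta}}{\alpha}\frac{\pi}{\sin(\pi\delta)}=\frac{\pi\delta}{2\sin(\pi\delta)}c^{\delta}$, where the Beta/reflection identity requires exactly the hypothesis $\delta=2/\alpha<1$ for convergence. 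Recombining the two pieces gives $\frac{\pi\delta}{2\sin(\pi\delta)}\frac{a^{1+\delta}-b^{1+\delta}}{a-b}$; factoring $s^{\delta}$ out of $\frac{a^{1+\delta}-b^{1+\delta}}{a-b}=s^{\delta}\frac{P_{\text{u}}^{1+\delta}-P^{1+\delta}}{P_{\text{u}}-P}$, multiplying by the prefactor $-2\pi\lambda_{\text{b}}$ from (\ref{eq:laplace}), and taking the expectation over $P$ yields precisely the exponent in (\ref{eq:approximation}).

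The routine part is this integral; the delicate part is the direction of the inequality, i.e. arguing that co-location produces an \emph{upper} bound on $\mathcal{L}_I(s)$ rather than a mere approximation. Everything reduces to comparing the angle- and link-distance-averaged kernel $\mathbb{E}_{r,\varphi}\bigl[(1+sP_{\text{u}}d_{m(x_i)}^{-\alpha})^{-1}\bigr]$ against its co-located surrogate $(1+sP_{\text{u}}v^{-\alpha})^{-1}$: the bound in (\ref{eq:approximation}) follows once the former is dominated by the latter after integration against the radial weight. This comparison is subtle because the cosine-law relation $d_{m(x_i)}^{2}=d_{x_i}^{2}+R_{x_i}^{2}-2R_{x_i}d_{x_i}\cos\varphi$ underlying (\ref{eq:laplace}) allows $d_{m(x_i)}$ to be either larger or smaller than $v$ depending on $\varphi$, so a clean pointwise distance inequality is unavailable; I expect this to be the main obstacle. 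I would attack it through a convexity/Jensen argument on the map $t\mapsto(1+sP_{\text{u}}t)^{-1}$ evaluated at $t=d_{m(x_i)}^{-\alpha}$, combined with the moment relation $\mathbb{E}_{\varphi}[d_{m(x_i)}^{2}]=v^{2}+r^{2}$, falling back on the compound-PPP co-location as the operative modelling assumption if a uniform domination cannot be established. The PGFL reduction and the closed-form integral of the first two paragraphs then go through verbatim.
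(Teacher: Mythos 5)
Your computation is correct and lands exactly on the claimed exponent, but you reach it by a different route from the paper. The paper's argument for this upper bound (note that the contents of the two appendix proofs are swapped relative to their titles: the compound-PPP argument for this lemma actually appears under ``Proof of Lemma 2'') does not return to the double integral of Theorem~\ref{thm:laplace} at all. Instead it lumps the two interference contributions of each BS--UE pair into a single combined fading variable $G_i = P_{x_i}h_{x_i}+P_{\text{u}}h_{m(x_i)}$ attached to the single distance $d_{x_i}$, computes the fractional moment $\mathbb{E}\{G_i^{\delta}\}=\Gamma(1+\delta)\,\mathbb{E}_P\left\{\frac{P_{\text{u}}^{1+\delta}-P^{1+\delta}}{P_{\text{u}}-P}\right\}$ of this two-scale hyperexponential, and invokes the standard PPP shot-noise formula $\exp\left(-\pi\lambda_{\text{b}}\Gamma(1-\delta)\mathbb{E}\{G^{\delta}\}s^{\delta}\right)$ together with the reflection identity $\Gamma(1+\delta)\Gamma(1-\delta)=\pi\delta/\sin(\pi\delta)$. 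Your partial-fraction split of $1-[(1+av^{-\alpha})(1+bv^{-\alpha})]^{-1}$ followed by the Beta-function integral is an elementary, self-contained evaluation of exactly the same quantity (the two are linked by $[(1+at)(1+bt)]^{-1}=\mathbb{E}_G[e^{-Gt}]$), so the computational cores are equivalent: yours makes explicit where $\delta=2/\alpha<1$ is needed, while the paper's is shorter because it cites the known fractional-moment and shot-noise results.

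On the one genuinely delicate point --- why co-location yields an \emph{upper} bound rather than an uncontrolled approximation --- you are right that no clean pointwise or Jensen argument is available, since the cosine law lets $d_{m(x_i)}$ be either larger or smaller than $d_{x_i}$. Be aware that the paper does not resolve this either: it simply asserts the direction of the inequality once $d_{x_i}=d_{m(x_i)}$ is imposed, and defers any quantitative justification to a cross-correlation argument omitted for lack of space. Your honest flagging of this step as the main obstacle therefore does not put you behind the paper's own proof; it identifies a gap that the paper shares.
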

\begin{proof}
See Appendix \ref{ap:app1}.
\end{proof}

Assuming the spatial distribution of the BSs and the UEs as two independent homogeneous PPPs, we obtain the following Lemma.
\begin{lem}
\label{thm:approximation2}
The Laplace transform of the interference is lower bounded by
\begin{small}
\begin{equation}
\label{eq:approximation2}
  {\mathcal{L}}_{I}(s) \geq  \exp\left(-\lambda_{\text{b}}\frac{{\pi}^2 \delta}{\sin\left(\pi\delta \right)}s^{\delta}\left(P_{\text{u}}^{\delta}+\mathbb{E}_P\{P^{\delta}\} \right) \right),
\end{equation}
\end{small}
\hspace*{-0.5em}where $\delta=2/\alpha <1$.
\end{lem}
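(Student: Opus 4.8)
The plan is to start from the exact Laplace transform \eqref{eq:laplace} of Theorem \ref{thm:laplace} rather than re-deriving it, and to show that replacing the geometrically coupled BS--UE interference by two \emph{independent} homogeneous PPPs can only decrease $\mathcal{L}_I(s)$. Writing \eqref{eq:laplace} as $\mathcal{L}_I(s)=\exp(-E)$ with
\begin{equation}
E=2\pi\lambda_{\text{b}}\,\mathbb{E}_P\!\int_0^\infty\!\left(1-\frac{g(v)}{1+sPv^{-\alpha}}\right)v\,dv,\nonumber
\end{equation}
where $g(v)\in[0,1]$ denotes the offset-averaged UE factor (the inner double integral over $r$ and $\varphi$ appearing in \eqref{eq:laplace}), the goal is to \emph{upper} bound $E$ by the exponent of \eqref{eq:approximation2}; this is equivalent to the claimed lower bound on $\mathcal{L}_I(s)$. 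The guiding intuition is that each interfering UE lies close to its serving BS, so the BS and UE interference streams are positively correlated, and decoupling them into independent processes heuristically makes the aggregate interference larger, thereby lowering $\mathcal{L}_I(s)$.

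The key step is a decoupling inequality. Since $g(v)$ is a probability-weighted average of quantities in $[0,1]$, we have $0\le g(v)\le 1$, and hence
\begin{equation}
1-\frac{g(v)}{1+sPv^{-\alpha}}\le \bigl(1-g(v)\bigr)+\frac{sPv^{-\alpha}}{1+sPv^{-\alpha}}.\nonumber
\end{equation}
Substituting this into $E$ splits the exponent into a pure UE part $2\pi\lambda_{\text{b}}\int_0^\infty\bigl(1-g(v)\bigr)v\,dv$ and a pure BS part $2\pi\lambda_{\text{b}}\,\mathbb{E}_P\int_0^\infty \frac{sPv^{-\alpha}}{1+sPv^{-\alpha}}v\,dv$, so it suffices to evaluate the two parts and add them.

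The two terms are then handled separately. For the BS part I would use the substitution $v=(sP)^{1/\alpha}y$ together with the identity $\int_0^\infty \frac{u^{\delta-1}}{1+u}\,du=\pi/\sin(\pi\delta)$ to obtain $\lambda_{\text{b}}\frac{\pi^2\delta}{\sin(\pi\delta)}s^\delta\,\mathbb{E}_P\{P^\delta\}$. For the UE part I would invoke the displacement theorem for PPPs: the interfering UEs $\{m(x_i)\}$ are obtained from $\Phi_{\text{b}}$ by i.i.d. displacements (link distance $r\sim f_R$, uniform direction), so they again form a homogeneous PPP of intensity $\lambda_{\text{b}}$; by translation invariance the change of variables $y=x_i+(\text{offset})$ eliminates the offset distribution entirely and collapses $2\pi\lambda_{\text{b}}\int_0^\infty\bigl(1-g(v)\bigr)v\,dv$ to exactly $\lambda_{\text{b}}\frac{\pi^2\delta}{\sin(\pi\delta)}s^\delta P_{\text{u}}^\delta$. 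Summing the two contributions reproduces precisely the exponent in \eqref{eq:approximation2}, giving $\mathcal{L}_I(s)=e^{-E}\ge\exp(-\lambda_{\text{b}}\frac{\pi^2\delta}{\sin(\pi\delta)}s^\delta(P_{\text{u}}^\delta+\mathbb{E}_P\{P^\delta\}))$.

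The main obstacle is pinning down the \emph{direction} of the bound rather than the integrals, which are routine. The naive reading of ``treat the BSs and UEs as two independent PPPs'' yields only an approximation; the real content is that this approximation is a genuine \emph{lower} bound. Getting this right hinges on combining two observations correctly: (i) only the cross term between the BS and UE factors needs to be relaxed, via $g(v)\le 1$, whereas (ii) the UE contribution is in fact \emph{exact} once the displacement theorem identifies $\{m(x_i)\}$ as an intensity-$\lambda_{\text{b}}$ PPP. A careless argument that simply factorizes the Laplace transform would obscure the fact that the inequality, not equality, enters solely through the coupling term, and could easily deliver the bound in the wrong direction.
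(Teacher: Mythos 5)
Your proof is correct, and it is in substance the same proof the paper gives --- just written out from first principles instead of citing a correlation inequality. (A caution when you check against the paper: the appendices are cross-wired. The proof of this \emph{lower} bound actually sits in the appendix titled ``Proof of Lemma \ref{thm:approximation1}'', where the bound is obtained by applying the FKG inequality to split $\mathcal{L}_I(s)$ into a BS-only product functional times a UE-only product functional, each evaluated by the PGFL and the displacement theorem; the appendix titled ``Proof of Lemma \ref{thm:approximation2}'' proves the upper bound of Lemma \ref{thm:approximation1}.) Your pointwise inequality $1-ab\le(1-a)+(1-b)$ for $a,b\in[0,1]$, integrated against the intensity measure inside the PGFL exponent, is exactly the FKG inequality specialized to a product of two decreasing $[0,1]$-valued functionals of a PPP: for such functionals $\mathbb{E}\{FG\}\ge\mathbb{E}\{F\}\mathbb{E}\{G\}$ reduces, after taking logarithms of the PGFL expressions, to $\int(1-fg)\le\int(1-f)+\int(1-g)$, which is your bound. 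So the two arguments coincide once the exponentials are expanded. What your version buys is self-containedness and a clear accounting of where the slack enters (only in the BS--UE cross term; the UE marginal is handled exactly by translation invariance, as you note), whereas the paper's FKG citation is shorter and applies directly to the marked point process before any mark-averaging. Your evaluation of the two resulting integrals, your use of the displacement theorem for the UE term, and the direction of the final inequality are all correct.
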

\begin{proof}
See Appendix \ref{ap:app2}.
\end{proof}
From the above lemmas, we obtain upper and lower bounds for the Laplace transform of the interference which are in the same form of
\begin{equation}
\label{eq.uniformexpression}
\mathcal{L}_{\mathit{I}}(s) = \exp\left(- \lambda_{\text{b}}\frac{\pi^2\delta}{\sin(\pi\delta)}g(\delta,P){\mathit{s}}^{\delta}\right),
\end{equation}
where $g(\delta,P)$ is a function of the $\delta$-th moment of transmit power $P$ with the pdf $f_P(\cdot)$, which is:
\begin{equation}
    \label{eq.picewiseofa}
    g(\delta,P)=\left\{\begin{matrix}
 \mathbb{E}_P\left\{\frac{P_{\text{u}}^{1+\delta}-P^{1+\delta}}{P_{\text{u}}-P}\right\},&\text{upper bound}\\
 \mathbb{E}_P\left\{P_{\text{u}}^{\delta}+P^{\delta}\right\} ,&\text{lower bound}
\end{matrix}\right.
\end{equation}
Since $\delta=2/\alpha<1$ and $P_{\text{u}} \not\equiv P$, $g(\delta,P)>0$ always holds. Thus $\mathcal{L}_{\mathit{I}}(0)=1$, $\lim \limits_{s\rightarrow \infty}\mathcal{L}_{\mathit{I}}(\mathit{s})=0$, and $\mathcal{L}_{\mathit{I}}(\mathit{s})$ is monotonically decreasing on $\left[0,\infty \right)$.

The tightness of these two bounds can be measured by the cross correlation of two point process $\Phi_{\text{b}}$ and $\Phi_{\text{u}}$. For lack of space, this proof is not mentioned in this paper. The closed-form bounds of $\mathcal{L}_I(s)$ obtained is useful to explore the optimal parameters for the proposed power control schemes.

In Full Duplex radio, the coverage probabilities for UL and DL are
\begin{small}
\begin{equation}
p_{\text{ul}}=\mathbb{E}_{P,R}\left\{\exp\left(-\theta_{\text{b}} R^{\alpha}\beta\frac{P}{P_{\text{u}}}-\frac{\lambda_{\text{b}}\pi^2\delta\theta_{\text{b}}^{\delta} R^{2}}{\sin(\pi\delta)} \frac{g(\delta,P)}{P_{\text{u}}}\right)\right\},
\end{equation}
\end{small}
\begin{small}
\begin{equation}
  p_{\text{dl}}=\mathbb{E}_{P,R}\left\{\exp\left(-\theta_{\text{u}} R^{\alpha}\beta\frac{P_{\text{u}}}{P}-\frac{\lambda_{\text{b}}\pi^2\delta\theta_{\text{u}}^{\delta} R^{2}}{\sin(\pi\delta)} \frac{g(\delta,P)}{P}\right)\right\}.
\end{equation}
\end{small}
The inequalities $\frac{\partial p_{\text{c,dl}}}{\partial{P}} > 0$ and $\frac{\partial p_{\text{c,ul}}}{\partial{P}} < 0$ hold for all parameters, but the equations $\frac{\partial^2 p_{\text{c,dl}}}{\partial P^2} = 0$  and $\frac{\partial^2 p_{\text{c,ul}}}{\partial P^2} = 0$ are transcendental equations, whose analytical solutions are hard to derive.

\subsection{Area Spectrum Efficiency}
The target data rates of the BSs and the UEs may be different, denoted by $R_{\text{b}}$ and $R_{\text{u}}$, and the SIR thresholds for DL and UL are $\theta_{\text{b}}=2^{R_{\text{b}}/W}-1$ and $\theta_{\text{u}}=2^{R_{\text{u}}/W}-1$, respectively, where $W$ is the spectrum bandwidth.
The achievable data rates are $\tau_{\text{m,b}}=R_{\text{b}} p_{\text{m,ul}}$ and $\tau_{\text{m,u}}=R_{\text{u}} p_{\text{m,dl}}$, $m\in\{\text{CPC,FPC,UPC,APC}\}$.
For a FD link, the sum data rate is given as (\ref{eq:capacity}).

In HD networks, the UL and DL achievable data rates are
\setcounter{equation}{22}
\begin{align}
 \tau_{\text{HD,b}}&=0.5W\log_2(1+\theta_{\text{b}})\mathbb{E}_R\left\{e^{-\frac{\lambda_{\text{b}}\pi^2\delta}{\sin(\pi\delta)}{(\theta_{\text{b}} R^{\alpha})}^{\delta}}\right\},\label{eq:HDUL}\\
 \tau_{\text{HD,u}}&=0.5W\log_2(1+\theta_{\text{u}})\mathbb{E}_{R}\left\{e^{-\frac{\lambda_{\text{b}}\pi^2\delta}{\sin(\pi\delta)}{(\theta_{\text{u}} R^{\alpha})}^{\delta}}\right\}.\label{eq:HDDL}
\end{align}
\setcounter{equation}{24}
ASE with unit $\text{bps}/\text{Hz}/\text{m}^2$ is given by
\begin{align}
\eta_{\text{m,ASE}}&=\frac{\lambda_{\text{b}}\left(\tau_{m,\text{u}}+\tau_{m,\text{b}}\right)}{W}\nonumber\\
&=\lambda_{\text{b}} \left[\log_2\left(1+\theta_{\text{u}} \right)p_{\text{m,dl}}+\log_2\left(1+\theta_{\text{b}}\right)p_{\text{m,ul}}\right].
\end{align}
\subsection{Energy Efficiency}
EE with unit $\text{bps}/\text{J}$ is given by (\ref{eq:EE}), where $P_{\text{s}}$ is static power consumption while communicating.

Since the power control is only applied in the DL, a potential advantage is that the DL and UL performance could be balanced. For a given $\beta$, if BS employs full power, the UL coverage is bad while the DL coverage is improved. In this regard, we will maximize the minimum achievable data rate by optimizing the parameters of different power control schemes. The comparison of the performance for different power control schemes are shown in the numerical results.

\section{Numerical Results}
\label{sec:numerical results}

In this section, we illustrate the performance of different power control schemes through numerical evaluations. We also evaluate the effect of the system parameters on the tightness of the upper and lower bound. In particular, we will investigate the coverage probabilities, the area spectrum efficiency and the energy efficiency based on the numerical results. The default system parameters are shown in Table \ref{tb:1}.

\begin{table}[tbp]
\centering
\caption{System parameters}
\label{tb:1}
\begin{tabular}{lccc}
\hline
Symbol & Description & Value \\
\hline
  $\lambda$            & \text{Density of BS}   & $10^{-6}$ /$\text{m}^{-2}$\\
  $\lambda_{\text{u}}$ & \text{Density of UL}   & $10^{-5}$ /$\text{m}^{-2}$\\
  $\theta_{\text{b}}$  &\text{ UL SIR threshold } &0\text{dB}\\
  $\theta_{\text{u}}$  &\text{ DL SIR threshold}  & 0\text{dB}\\
  $\alpha$             & \text{Path loss exponent  }    & 4\\
  $\beta$              & \text{Residual SI-to-power ratio} & -100 \text{dB}\\
  $P_{\text{u}}$       & \text{Transmit power of UE }   & 23 \text{dBm} (200\text{ mW})\\
  $P_{\text{s}}$       & \text{Static power consumption} & 150 \text{mW}\\
  $P_{\max}$           &\text{ Peak transmit power of BS} & 43 \text{dBm} (2 \text{W})\\
  $P_{\min}$           & \text{BS minimum transmit power} & 200 \text{mW}\\
  $\epsilon$           & \text{Fractional power control exponent} & 0.1\\
  $W$                  & \text{Spectrum width}       & 10 \text{MHz}\\
  $R_{\text{b}}$       & \text{Target UL data rate}       &　10 \text{Mbps}\\
  $R_{\text{u}}$　　　&\text{Target DL data rate}　　　&　10 \text{Mbps}\\
\hline
\end{tabular}
\end{table}

\subsection{Tightness of Bounds}

Two different approaches are proposed to derive the upper and lower bound for the coverage probabilities in Lemma \ref{thm:approximation1} and Lemma \ref{thm:approximation2}. The lower bound is obtained by the independent approximation, and the upper bound is obtained by assuming the link distances from the interfering BS and the distance from the dedicated UE are linearly correlated.
%\begin{figure}[!ht]
%\setlength{\textfloatsep}{-10pt}
%\centering
%\includegraphics[scale=0.5]{tighnress.eps}
%\vspace{-2em}
%\caption{An example of the Laplace transform: simulation result, the analytical expression, the lower bound and the upper bound.}
%\label{pic.tightness}
%\end{figure}
The tightness of these bounds depends on the system parameters. We consider four curves: the actual coverage probabilities determined by Monte-Carlo simulation, the numerical evaluations of the derived coverage probabilities, the lower bound and the upper bound of the coverage probabilities. The tightness of the bounds is shown in Fig. \ref{pic.tightness}.

From Fig. \ref{pic.tightness}, we observe that both the analytical expressions and the bounds approach the simulation results well, and numerically the lower bound is a tight approximation. Since the tightness of the bounds highly depends on the values of $sP_{x_i}d_{x_i}^{-\alpha}$ and $sP_{\text{u}}d_{m(x_i)}^{-\alpha}$, we plot the Laplace transform of the interference for different node densities $\lambda_{\text{b}}$, different peak transmit powers $P_{\max}$, and different path loss coefficients $\alpha$.

\subsection{SI Cancellation Requirement}

Fig. \ref{fig:a} implicates that power control schemes greatly improve the performance of FD communication when the distance is large. In other word, implementing the proposed power control schemes notably improves the performance for the cells with large coverage region in the FD networks. For the default network parameters, FPC outperforms others in terms of the ASE. For example, for $\beta=-80$ dB, FD network with CPC outperforms HD network only when the link distance is less than 180 m, and FD network with FPC outperforms HD when the link distance is less than 250m. For $\beta=-100$ dB, FD network with CPC outperforms HD network when the link distance is less than 330 m, and FD network with FPC outperforms HD when the link distance is less than 500m. The performance of the FD networks decreases when increasing the link distance. Without power control schemes, the FD outperforms the HD network which requires perfect SI cancellation capability and cells with small coverage region.
Implementing power control schemes alleviates the requirement for the SI cancellation capability.

\subsection{ASE and EE}
In Fig. \ref{fig:ASEEEinPamax}, when the peak power constraint $P_{\max}$ increases, the EE decreases and the ASE performs differently for different power control schemes. For CPC and APC schemes, the ASE decreases as $P_{\max}$ increases, because when $P_{\max}$ increases, the network interference increases rapidly. For CPC scheme, both the SIR and the achievable rate decreases, then the ASE and the EE decrease. For APC scheme, each BS will choose low transmit probability to guarantee the link quality, the ASE decreases and the EE decreases as a result.

%\begin{figure}[!t]
%\centering
%\includegraphics[scale=0.5]{achievabledatarate.eps}
%\vspace{-2em}
%\caption{The relationship between link distance and conditioned ASE under different power control schemes.}
%\label{fig:a}
%\end{figure}

As Fig. \ref{fig:tradeoff} shows, for the APC scheme and the CPC scheme, the DL performance dominates the FD performance all the time. When the ASE increases, the ratio of the ASE of DL to the ASE of UL decreases to 1, which means that the ASE as well as the fairness can be optimized at the same time.

For FPC, the DL performance dominates the FD performance in terms of ASE all the time. When the ASE increases, the ratio of the ASE of DL to the ASE of UL increases to 1. For UPC scheme, the variation of the ASE is small and the offload between UL and DL can be flexible.

For APC scheme, the EE increases as the ratio of the EE of DL to the EE of UL decreases to 1, the EE and the fairness between DL and UL can be optimized at the same time. For CPC scheme, the EE increases as the ratio between DL and DL increases and the DL dominates the performance all the time, which means the EE for CPC scheme increases at the cost of the fairness between UL and DL.

For UPC and FPC, the EE increases as the ratio of the EE of DL to the EE of UL decreases, the offload between UL and DL can be flexible. Fig. \ref{fig:tradeoff} indicates that the APC scheme can be viewed as a fair power control scheme since both the EE and ASE approach their maximum values when the ratios between the ASE and the EE of UL and the ASE and the EE of DL equals to 1.

As for other power control schemes, increasing the ASE or the EE is achieved at the cost of decreasing the fairness between the DL and the UL, this unfairness can be used when UL and DL has different offload. For example, when the traffic demand of DL is twice of the traffic demand of UL, the ASEs of the FPC scheme, the APC scheme and the UPC scheme are 0.45bps/Hz/$\text{km}^2$, 0.5bps/Hz/$\text{km}^2$ and 0.56bps/Hz/$\text{km}^2$ respectively, then UPC is a better scheme in terms of the ASE. A novel hybrid power control schemes for FD networks need to be explored to further improve the performance of the FD networks in the large-scale networks.

\section{Conclusion}
\label{sec:conclusion}
In this paper, we proposed two approaches to approximate and bound the Laplace transform of the interference and prove their tightness. These closed-form results are useful when obtaining optimal parameters for the power control schemes. Furthermore, we evaluate the achievable data rates, the ASE and the EE of FD radio networks for different power control schemes.
%\begin{figure}[!ht]
%\setlength{\abovecaptionskip}{-10pt}
%\setlength{\belowcaptionskip}{1pt}
%\centering
%\includegraphics[width=0.5\textwidth]{ASEEEinPamax.eps}
%\vspace{-2em}
%\caption{Tradeoffs between the UL and the DL for different values of peak power constraint $P_{\max} \in [0,2]$. Other network parameters are: $\lambda_{\text{u}}=10^{-5} /\text{m}^2$, $\alpha=4$, $\beta=-100$dB.}
%\label{fig:ASEEEinPamax}
%\end{figure}
%
%\begin{figure}[!ht]
%\setlength{\abovecaptionskip}{-10pt}
%\setlength{\belowcaptionskip}{1pt}
%\centering
%\includegraphics[width=0.45\textwidth,height=0.26\textwidth]{}
%\vspace{2em}
%\caption{Tradeoffs between the ASE and EE for different value of peak power constraint $P_{\max} \in [0.2,1.2]$. Other network parameters are: $\lambda_{\text{u}}=10^{-6}/\text{m}^2$, $\lambda=10^{-6}/\text{m}^2$, $\alpha=4$, $\beta=-100$dB.}
%\label{fig:tradeoff}
%\end{figure}

Numerical results reveal that the ASE is improved by the randomness in the transmit power. The FD network outperforms the HD network when high SI cancellation capability and short link distance (small cells) are implemented. Given SI cancellation capability, the proposed power control schemes greatly improve the performance of the transmission for long distance in the FD networks. The impact of SI cancellation capability and peak power constraint on the performance of the proposed power control schemes are also investigated. These results can be used to design a dynamically manageable power control schemes in large scale wireless network to achieve significant energy savings while guarantee the quality of service of the UEs.

\appendices
\section{Proof of Lemma \ref{thm:approximation1}}
\label{ap:app1}
From the Fortuin-Kasteleyn-Ginibre (FKG) inequality \\(\cite{15}, Thm.10.13), we can get the following inequality:
\setcounter{equation}{26}
\begin{align}
\mathcal{L}_{I}(s)
&\geq \mathbb{E}_{\Phi_{\text{b}},P,h}\left\{\prod_{x_i\in \Phi_{\text{b}}\backslash\{x_0\}} e^{ -s P_{x_i} h_{x_i}d_{x_i}^{-\alpha }}\right\}\nonumber\\
&\times \mathbb{E}_{\Phi_{\text{b}},h}\left\{\prod_{x_i\in \Phi_{\text{b}}\backslash\{x_0\}} e^{-sP_{\text{u}} h_{m(x_i)} d_{m(x_i)}^{-\alpha }}.\right\}
\end{align}
\setcounter{equation}{27}
And the first term at the right side of inequality can be calculated as
\begin{small}
\begin{equation}
   {\mathbb{E}}_{\Phi _{\text{b}},P,h}\left\{\prod_{x_i\in \Phi_{\text{b}}\backslash\{x_0\}} e^{ -s P_{x_i} h_{x_i}d_{x_i}^{-\alpha }}\right\}=e^{-\frac{\lambda_{\text{b}}\pi^2\delta}{\sin\left(\pi\delta\right)}s^{\delta}\mathbb{E}_P\left\{P^{\delta} \right\}}.
\end{equation}
\end{small}
While the second term at the right side of the inequality can be obtained by same way due to the displacement theorem \\(\cite{15}). Multiplying two term completes the proof.

\section{Proof of Lemma \ref{thm:approximation2}}
\label{ap:app2}
When assuming $\Phi$ and $\Phi_{\text{u}}$ as a compound PPP, i.e., $d_{x_i} = d_{m{x_i}}$, the channel fading can be denoted by $G_i=P_{x_i}h_{x_i}+P_{\text{u}}h_{m\left(x_i\right)}$. Conditioned on given $P$, with similar proof in \cite{HFDHD}, we have
\begin{small}
\begin{equation}
\mathbb{E}_{G_i}\{G_i^\delta\}=\mathbb{E}_P\left\{\frac{\Gamma \left(1+\delta  \right)\left(P_{\text{u}}^{1+\delta } -P^{1+\delta }\right)}{P_{\text{u}}-P} \right\}.
\end{equation}
\end{small}
Then we obtain the upper bound as
\begin{align}
    \mathcal{L}_{I}\left(s \right)& \leq \exp\left(-\pi\lambda_{\text{b}}\mathbb{E}_{G_i}\left\{ {G_i}^\delta\right\}\Gamma \left(1-\delta  \right)s^{\delta } \right)\nonumber\\
   &=\exp\left(-\pi\lambda_{\text{b}}\mathbb{E}_P\left\{\frac{\pi\delta}{\sin\left(\pi\delta\right)}\frac{s^{\delta}\left(P_{\text{u}}^{1+\delta}-P^{1+\delta}\right)}{P_{\text{u}}-P}\right\}\right)\nonumber\\
   &=\exp\left(-\frac{\lambda_{\text{b}}{\pi}^{2}\delta{s}^{\delta}}{\sin\left(\pi\delta\right)}\mathbb{E}_P\left\{\frac{P_{\text{u}}^{1+\delta}-P^{1+\delta}}{P_{\text{u}}-P}\right\}\right).
\end{align}

\ifCLASSOPTIONcaptionsoff
  \newpage
\fi

\bibliographystyle{IEEEtran}
\bibliography{reference}

\end{document}